\newcommand{\dist}{{D(\mathcal{T}, \mathcal{H})}}
\newcommand{\distt}[1]{{D(\mathcal{T}, \mathcal{H})}_{(#1)}}
\newcommand{\A}{{\mathcal A}}
\newcommand{\etal}{\textit{et al.}}
\newcommand{\pushr}[1]{\ifmeasuring@#1\else\omit$\displaystyle#1$\ignorespaces\fi}
\algrenewcommand{\algorithmiccomment}[1]{/*{#1}*/}
\title{Optimally Tracking Labels on an Evolving Tree}
\author{Aditya Acharya\thanks{Department of Computer Science,
        University of Maryland,College Park MD, USA, {\tt adach@umd.edu}}
        \and
        David M. Mount\thanks{Department of Computer Science and Institute for Advanced Computer Studies, University of Maryland, College Park MD, USA, {\tt  mount@umd.edu}}}
\begin{document}
\thispagestyle{empty}
\maketitle

\begin{abstract}
Motivated by the problem of maintaining data structures for a large sets of points that are evolving over the course of time, we consider the problem of maintaining a set of labels assigned to the vertices of a tree. We study the problem in the evolving data framework, where labels continuously change over time due to the action of an agent called the evolver. An algorithm, which can only track these changes by explicitly probing the individual vertices, is tasked with maintaining an approximate sketch of the underlying tree. Such a framework necessitates an algorithm which is fast enough to keep up with the changes, while simultaneously being accurate enough to maintain a close approximation. We present an algorithm that allows for both randomized and adversarial evolution of the data, subject to allowing a constant speedup factor over the evolver. We show that in the limit, it is possible to maintain labels to within an average distance of $O(1)$ of their actual locations. We also present nearly matching lower bounds, both on the distance, and the speed-up factor.
\end{abstract}

\section{Introduction} \label{sec:intro}

Many modern data sets are characterized by two qualities: massive size and dynamic variation with time. The combination of size and dynamics makes maintaining them extremely challenging. Algorithms that recompute the structure can be prohibitively expensive, owing to scale of the data set. Standard models for dynamic structures (e.g., \cite{dynamic_graphs}) may not be applicable because we may not know where or when changes occur within the structure. These qualities together challenge the traditional single-input/single-output model used in the field of algorithm design.

Anagnostopoulos {\etal}~\cite{evolving_sorting_first} proposed the \emph{evolving data framework} to capture the salient aspects of such data sets. In this framework, the structure varies continuously through the actions of an \emph{evolver}, which makes small, random changes to the structure behind the scenes. Instead of taking a single input and producing a single output, an algorithm judiciously \emph{probes} the current state of the structure and attempts to continuously maintain a view of the structure that is as close as possible to its actual state. 

In this paper, we consider the problem of maintaining a tree with $n$ distinct labeled nodes in this framework. The tree topology is assumed to be fixed over time, but the evolver changes label locations by swapping the labels of two adjacent vertices. We consider the problem both in the classical evolving framework, where swaps are chosen uniformly at random, and an adversarial framework, where the evolver's swaps are arbitrary. To probe the structure's current state, we assume the existence of an \emph{oracle}, which given a pair consisting of a label and a vertex, either reports that the label truly resides at this vertex, or it returns an edge incident to the vertex indicating the first edge along the path leading from the probed vertex to the vertex where the label currently resides. 

We model our current state by means of a \emph{hypothesized labeling}, that is, a mapping of labels to the vertices. Unlike the actual labeling, the mapping need not be 1--1. Our update algorithm is extremely simple. With each step, it queries a label-vertex pair. If the label is not at this vertex, it moves the label hypothesis one vertex closer to its actual location in the tree. To measure how close our hypothesis is to the truth, we define a \emph{distance function}, which is just the sum of distances over all the labels between their hypothesized and true locations. Note that the evolver moves two labels with each step, while our algorithm moves only one. For this reason we provide our algorithm with a \emph{speedup factor} $c \geq 1$ (not necessarily an integer), which allows our algorithm to perform multiple steps for each single action of the evolver. (Further details are given in Section~\ref{sec:formulation}.)

We present four main results. We first show that, even in the most benign case of a uniform random evolver and any constant speedup, the steady-state distance over a bounded degree tree is $\Omega(n)$ (Theorem~\ref{thm:lower_bound}). Second, we show that given a speedup factor of $c = 2$ and a uniform random evolver, there exists a simple algorithm that achieves a steady-state distance of $O(n)$, for any bounded degree tree (Theorem~\ref{thm:fin1}). Next, we show that given a speedup factor of $c > 2$, for any evolver, the same simple algorithm achieves a steady-state distance of $O(n)$ (Theorem~\ref{thm:fin2}). Finally we show that for any speedup $c < 2$, there exists a tree, and an adversarial evolver, such that the steady state distance is not in $o(n^2)$ (Theorem~\ref{thm:twominus}).

\subsection{Related Work}

The problem we consider here falls under the general category of pebble motion problems. Given a graph $G = \{V,E\}$, a set of labels $L = \{l_1,l_2, \ldots, l_k\}$, a labeling configuration is defined as a mapping $M: L \rightarrow V$, such that $M(l_i) \neq M(l_j)$, for $l_i \neq l_j$. A single move of a label $l$, where $M(l) = v$, can be defined as updating the mapping to $M(l) = u$, where $u$ is a neighbor of $v$.

Given two such label assignments $M_1$ and $M_2$ on a common graph, the problem of deciding whether there is a sequence of moves to transform $M_1$ to $M_2$, was first referred to as the \textit{pebble motion problem} by Kornhauser {\etal}~\cite{pebble_motion}. Under the restriction that a label can only be moved to an unmapped neighboring vertex, Goraly and Hassin {\etal}~\cite{pebble_feasibility} show that the feasibility problem can be decided in linear time. Ratner {\etal}~\cite{pebble_nphard} proved that the associated problem of finding the optimal sequence of moves is NP-hard. 
  
A variant of pebble motion that is more closely related to this paper is the problem of \textit{token swapping}. Again we have a graph with $n$ vertices, and there are $n$ distinct labels. A single move involves swapping the labels of two neighboring vertices. It is easy to see that on a simple path, transforming one configuration to another is akin to sorting the path, and therefore such a sequence of swaps can be generated by a variant of bubble sort. Yamanaka {\etal}~\cite{swapping_labels} showed that there exists a polynomial time 2-approximation when the graph is tree. Miltzow {\etal}~\cite{token4apx} generalized this to a polynomial time 4-approximation on general graphs. Graf considered a very similar problem of moving objects along a tree by a robot and presents an excellent collection of similar problems \cite[Section~6]{sort_walking}.
  
Another related line of work involves algorithms for evolving data sets, which was first introduced by Anagnostopoulos {\etal}~\cite{evolving_sorting_first}. In their framework, the input data set is constantly changing through the actions of a random evolving agent, or \emph{evolver}, and an algorithm is tasked with maintaining an output that is close to the one corresponding to the current data. The algorithm can only access the data set through a series of probes, each of which returns some relevant local information. They considered the problem of maintaining a sorted order of points, where the true ranking of points evolves over time. Besa {\etal}~\cite{evolving_sorting} gave an optimal algorithm that maintains an approximate ordering with only $O(n)$ inversions. They showed that a repeated run of an $O(n^2)$ time sorting algorithm like the \textit{insertion sort} suffices. 
 
Researchers have considered other problems in the evolving context, including path connectivity, minimum spanning trees~\cite{evolving_graph}, shortest paths~\cite{evolving_shortestpath}, and page rank~\cite{page_rank}, among others. A common theme across these papers is the evolution of the list of edges of the graph, either through introducing a new edge, and deleting an existing one, or by changing the ranking of the edge weights.

\subsection{A New Framework for Evolving Data}

Our framework differs from the standard evolving data framework in few significant aspects. The first involves the behavior of the evolver. An important characteristic of the evolving model introduced in \cite{evolving_sorting_first} is that the evolver acts randomly, and algorithms in this model exploit the fact that the evolver will occasionally improve matters. In this paper we consider both uniformly random evolvers as well as evolvers that are non-uniform, possibly deterministic, which may act in an adversarial manner. 

The second difference is that our structure is more general in that the mapping of labels to vertices need not be 1--1. We think of the structure that the evolver acts on as a ``real world'' object, which has capacity constraints on the number of labels each vertex can hold. In contrast, we think of our hypothesized labeled point set as a theoretical model of this real-world structure, which is not constrained by real-world limitations. We also provide our algorithm with a constant speed-up factor, to handle cases when each step of the evolver effects a bigger change than that of the algorithm. In compensation for this asymmetry, our algorithms and analyses are much simpler.

The final difference is the nature of the oracle. 
We can view our problem as a generalization of evolutionary sorting, but where the domain is a tree structure, rather than a linear list. In sorting, the oracle determines whether two objects are out of order, but this is not really meaningful in our tree-based setting. Instead, our oracle provides a directional pointer to the current location of the label.


\section{Problem Formulation} \label{sec:formulation}

In this section we provide the specifics of our evolving token/label swapping problem. We are given a fixed undirected tree $T = (V,E)$ with $n$ vertices and maximum degree $k$. Each vertex of the tree is assigned a unique label from the set of labels $L = \{l_1, \ldots, l_n\}$, that is, there is a bijective mapping $M_T : L \rightarrow V$. At any time, let $\mathcal{T} = \{T,M_T\}$ denote the current ``true'' labeled tree (see Figure~\ref{fig:Example}(a)).
 
\begin{figure}[h]
    {\includegraphics[width = \linewidth]{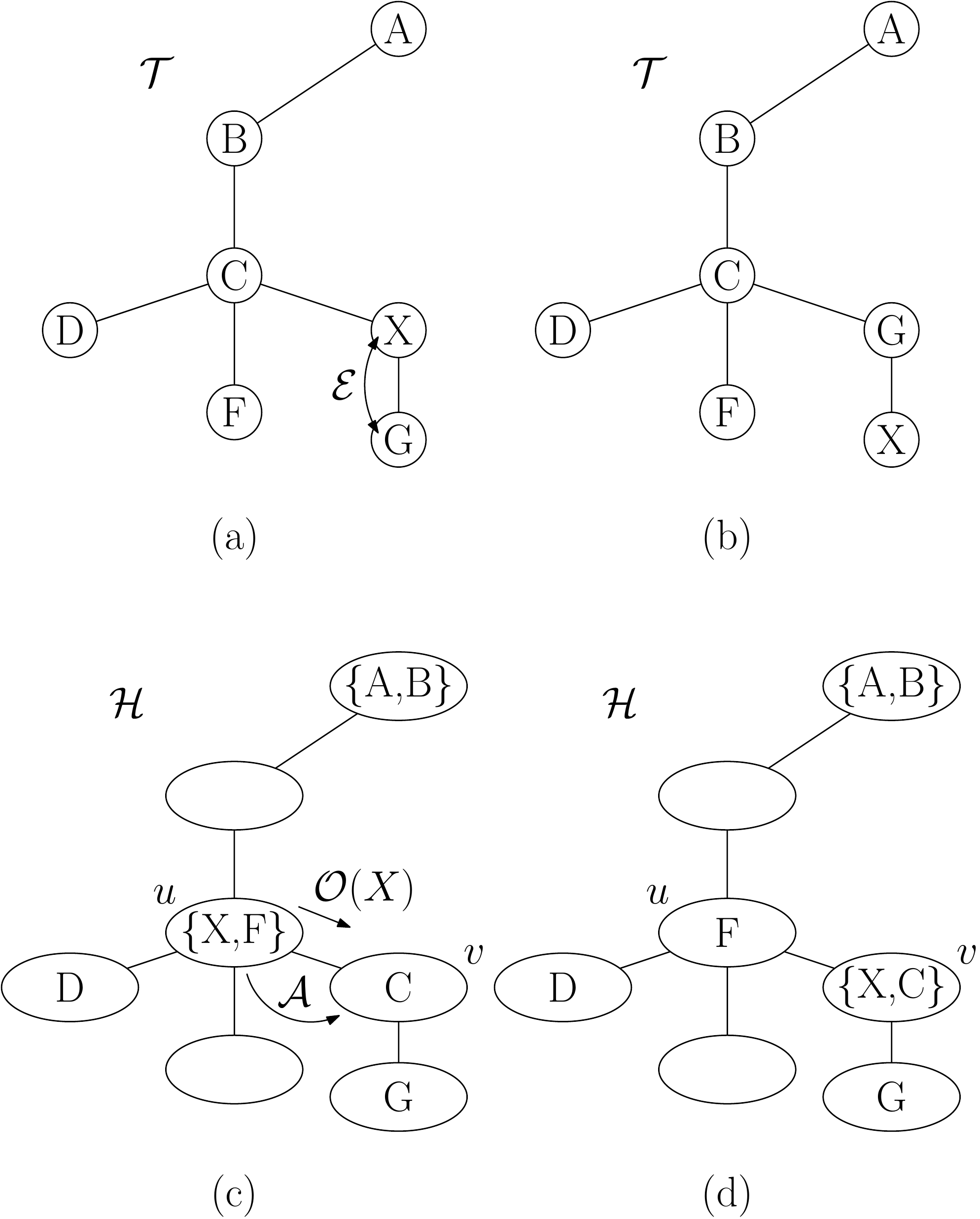}}
    \captionsetup{singlelinecheck=off}
    
    \caption[foo]{The action of the algorithm on a labeled tree $\mathcal{T}$, evolver $\mathcal{E}$, a labeled hypothesis tree $\mathcal{H}$, and oracle $\mathcal{O}$.
    (a): The current state of the underlying labeled tree $\mathcal T$. 
    (b): The state of $\mathcal T$ after the evolver swapped labels across a pair of adjacent nodes. 
    (c): A single step in our algorithm $\mathcal{A}$ on $\mathcal H$---Query label $X$, find that the oracle is pointing us to the location of $X$ on $\mathcal{T}$, and then move the label $X$ to the adjacent node in the returned direction. 
    (d): The final state of our hypothesis tree $\mathcal{H}$ after a single step of $\mathcal A$.}
\label{fig:Example}
\end{figure} 

The \textit{evolver}, denoted $\mathcal{E}$, introduces changes to the labelings. Each time it runs it selects a pair of adjacent vertices in $T$ and swaps their labels. The evolver may either be \emph{random} or \emph{adversarial}. In the former case the pair to be swapped is chosen uniformly at random, and in the latter the adjacent pair can be chosen arbitrarily, deterministically or adversarially. In Figure~\ref{fig:Example}(a) and (b), the evolver swaps labels $X$ and $G$.
 
Our algorithm maintains a model of current labeled tree in the form of a structure we call a \emph{hypothesis tree}, denoted $\mathcal{H} = \{T, M_H\}$, where $T$ is the same tree, and $M_H : L \rightarrow V $ is a (not necessarily bijective) mapping from labels to vertices. Note that $M_H$ may assign multiple labels to a vertex of $T$ (see Figure~\ref{fig:Example}(c)).

In order to probe the current actual state, we assume the existence of \emph{oracle}, denoted $\mathcal{O}$. Each query to the oracle is presented in the form of a pair $(l_i, u)$, where $l_i$ is a label and $u$ is a vertex. If $l_i$ is currently located at $u$, the oracle returns a special value \textit{null}. Otherwise, it returns the edge incident to $u$ that lies on the shortest path from $u$ to $M_T(l_i)$, the vertex that contains $l_i$ in the true labeling. (In Figure~\ref{fig:Example}(c), the query $\mathcal{O}(\textrm{X}, u)$ returns the edge $(u,v)$ because in the actual tree, the path to the node $w$ containing $\textrm{X}$ contains this edge.)

Each single step of algorithm $\mathcal{A}$ involves the following actions: $\mathcal A$ selects a label $l$ and a vertex $u$. Then queries the oracle to find $\mathcal{O}(l, u)$ and then is free to move the label $l$ from $M_H(l)$ to any adjoining node in the tree. A step of one such algorithm is illustrated in Figure~\ref{fig:Example}(c) and (d), where the algorithm is applied to label $\textrm{X}$. The query $\mathcal{O}(\textrm{X}, u)$ returns $(u,v)$, and the algorithm moves label $\textrm{X}$ to $v$. We define $\mathcal C$ as the class of such algorithms, and throughout this paper we only consider algorithms from this class. \label{def:class}

To measure how close our hypothesized labeling is to the true labeling we introduce a natural \emph{distance function}. Given two vertices $u$ and $v$ in $T$, define their distance $d(u,v) = d_T(u,v)$ to be the tree distance, i.e., the length (number of edges) of the path between them. Given the true labeling $\mathcal{T}$ and the hypothesized labeling $\mathcal{H}$ and any label $l_i$, let $D_i = d(M_T(l_i), M_H(l_i))$ denote the distance between the assigned label positions. Define the overall distance to be $D(\mathcal{T}, \mathcal{H}) = \sum_{l_i \in L} D_i$. \emph{Remark:} $\dist$ is a metric since it is the sum of tree distances, which are themselves metrics for a particular label.

Observe that with each step the evolver can affect the overall distance by at most 2, moving each of the labels being swapped one node farther from our current hypothesis. Since we have $n$ vertices and the maximum distance between two nodes on the tree is $n-1$, we have $\dist \in O(n^2)$. It is easy to see that there exists a tree $T$ and a sequence of swaps by the evolver, which results in $\dist \in \Omega(n^2)$. Specifically, consider the case where $T$ is a path and the labels are swapped in a sequence to result in a labeled path with the labels sorted in the opposite order. On the other hand, our algorithm clearly satisfies the following invariant: Every step of an algorithm from class $\mathcal C$ reduces the overall distance $\dist$ by at most 1.

Given the disparity between the evolver's and our algorithm's effect on $\dist$, we will allow our algorithm a modest \textit{speedup factor}. We denote this by a constant $c \geq 1$. This means that the time taken by a single step of the evolver is $c$ times as that of the algorithm. Or in other words, over a large enough time interval if the algorithm takes $m$ steps, the evolver takes $m/c$ steps. 

The problem considered for a given speedup factor $c$ and any arbitrary starting configuration of $\mathcal H$: Does there exist an algorithm with this speedup factor such that, in the steady state, after arbitrarily long execution sequences, $\dist = o(n^2)$? We will in fact show that (depending on the nature of the evolver) that there exists a deterministic algorithm and an associated speedup factor such that $\dist = O(n)$ from the underlying labeled tree, after some sufficiently large time and with high probability.

\section{Probabilistic Tools}
 
In this section we mention the probabilistic tools we use through out the paper. First, as a concentration bound, we use a weak version of Chernoff's inequality (Theorem~{4.5} in~\cite{book_probability}).
 
\begin{lemma}[Chernoff Bound] \label{lem:ch}
Let $X_1, X_2, \ldots, X_n$ be independent random indicator variables, let $X = \sum_{i} X_i$, and let $\mu = E[X]$. Then, $\Pr\left[X \leq\frac{\mu}{2} \right] \leq \exp(-\frac{\mu}{8})$.
\end{lemma}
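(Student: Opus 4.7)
The plan is to follow the classical Chernoff lower-tail argument, since the stated inequality is just the $\delta = 1/2$ special case of the multiplicative bound $\Pr[X \leq (1-\delta)\mu] \leq \exp(-\mu\delta^2/2)$ for $\delta \in (0,1)$. The proof is a textbook calculation (the lemma is quoted from \cite{book_probability}), so the goal here is to outline where each piece enters.

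First, I would introduce a parameter $t > 0$ and rewrite the event $\{X \leq \mu/2\}$ as $\{e^{-tX} \geq e^{-t\mu/2}\}$. Applying Markov's inequality to the nonnegative random variable $e^{-tX}$ gives
$$
\Pr[X \leq \mu/2] \;\leq\; e^{t\mu/2}\, E\!\left[e^{-tX}\right].
$$
Independence of the $X_i$ then lets me factor $E[e^{-tX}] = \prod_i E[e^{-tX_i}]$, and for each indicator with $p_i = E[X_i]$ I can use $E[e^{-tX_i}] = 1 + p_i(e^{-t}-1) \leq \exp(p_i(e^{-t}-1))$ via the elementary inequality $1+x \leq e^x$. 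Multiplying these factors and using $\sum_i p_i = \mu$ yields
$$
\Pr[X \leq \mu/2] \;\leq\; \exp\!\left(\tfrac{t\mu}{2} + \mu(e^{-t}-1)\right).
$$

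Next, I would optimize in $t$. Setting the derivative of the exponent to zero gives $e^{-t} = 1/2$, i.e., $t = \ln 2$. Substituting back produces a bound of the form $\bigl(e^{-\delta}/(1-\delta)^{1-\delta}\bigr)^\mu$ with $\delta = 1/2$, which is the standard ``closed form'' Chernoff estimate for the lower tail. To reach the clean expression $\exp(-\mu/8)$ I would then invoke the standard convexity estimate $\delta - (1-\delta)\ln(1-\delta) \geq \delta^2/2$ on $[0,1)$; plugging $\delta = 1/2$ gives exactly $\exp(-\mu/8)$.

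The main (and only) obstacle is that last calculus inequality, which can be verified either by comparing Taylor series around $\delta = 0$ or by checking that the function $f(\delta) = \delta - (1-\delta)\ln(1-\delta) - \delta^2/2$ satisfies $f(0) = 0$ and $f'(\delta) \geq 0$ on $[0,1)$. Everything else in the derivation (Markov, independence, and $1+x \leq e^x$) is entirely mechanical, which is why the authors simply cite the result rather than rederive it.
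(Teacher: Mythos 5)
The paper does not prove this lemma at all---it simply cites it as Theorem~4.5 of \cite{book_probability}---so there is nothing to compare against; your derivation is the standard moment-generating-function argument (Markov on $e^{-tX}$, factor by independence, $1+x\leq e^x$, optimize at $e^{-t}=1-\delta$) and it is correct. One small slip: the calculus inequality you need in the last step is $\delta + (1-\delta)\ln(1-\delta) \geq \delta^2/2$, not $\delta - (1-\delta)\ln(1-\delta) \geq \delta^2/2$; the version you wrote is true but trivially so (its left side is larger) and is not the one that converts $\bigl(e^{-\delta}/(1-\delta)^{1-\delta}\bigr)^{\mu}$ into $e^{-\mu\delta^2/2}$. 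With that sign corrected, setting $\delta=1/2$ indeed yields $\exp(-\mu/8)$.
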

 
Next we use a concept called Poisson approximation. Suppose $X_1, X_2, \ldots, X_n$, are the random variables indicating the number of balls in the $i^{th}$ bin, when $m$ balls are thrown into $n$ bins uniformly at random. We call this the exact case.
 
Let $Y_1, Y_2, \ldots, Y_n$ be  independent Poisson random variables with $\Pr[Y_i = k] = e^{-\lambda} \frac{\lambda^k}{k!}$, where $\lambda = m/n$. In other words, $Y_i$ represents the load in a bin, when the number of balls in each of them is a Poisson distribution with parameter $\lambda$. We note the following on any event that is a function of the loads of each bin. (Corollary 5.9 \cite{book_probability}.) 
 
\begin{lemma}[Poisson Approximation] \label{lem:poisson}
Any event that takes place with probability $p$ in the Poisson case takes place with probability at most $p~e~\sqrt{m}$ in the exact case.
\end{lemma}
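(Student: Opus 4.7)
The plan is to use the standard coupling between the multinomial distribution on bin loads and a vector of independent Poissons conditioned on their sum, as in the textbook proof from which this lemma is drawn.

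First I would establish the conditional equivalence. For any nonnegative integers $k_1, \ldots, k_n$ with $\sum_i k_i = m$, a one-line computation gives
\[
\Pr\!\left[Y_1 = k_1, \ldots, Y_n = k_n \,\Big|\, \textstyle\sum_i Y_i = m\right] = \frac{\prod_i e^{-\lambda}\lambda^{k_i}/k_i!}{e^{-m} m^m / m!} = \binom{m}{k_1,\ldots,k_n}\, n^{-m},
\]
which is exactly the multinomial law governing $(X_1, \ldots, X_n)$ when $m$ balls are thrown uniformly into $n$ bins. Hence, conditioned on their total being $m$, the vector $(Y_1, \ldots, Y_n)$ is distributed identically to the exact-case load vector $(X_1, \ldots, X_n)$.

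Next, for an event $A$ that depends only on the load vector, the elementary inequality
\[
\Pr[(Y_1, \ldots, Y_n) \in A] \geq \Pr\!\left[(Y_1, \ldots, Y_n) \in A \,\Big|\, \textstyle\sum_i Y_i = m\right] \cdot \Pr\!\left[\textstyle\sum_i Y_i = m\right]
\]
combined with step one immediately yields $\Pr[(X_1, \ldots, X_n) \in A] \leq p \big/ \Pr[\sum_i Y_i = m]$, where $p$ denotes the Poisson-case probability.

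Finally I would lower bound the normalizing probability. Because the sum of $n$ independent Poissons with mean $\lambda$ is itself Poisson with mean $n\lambda = m$, we have $\Pr[\sum_i Y_i = m] = e^{-m} m^m / m!$, and a standard Stirling-type bound $m! \leq e\sqrt{m}\,m^m e^{-m}$ gives $\Pr[\sum_i Y_i = m] \geq 1/(e\sqrt{m})$. Substituting this into the previous display produces the claimed factor of $e\sqrt{m}$. The only mildly subtle point is confirming that $A$ is truly a function of the load vector so the conditioning is well-defined uniformly across all configurations whose bin counts sum to $m$; the remaining work is a single Stirling estimate and routine algebra, which is why I expect no real technical obstacle.
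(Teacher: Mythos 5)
Your proof is correct and is exactly the standard argument from Mitzenmacher and Upfal that the paper cites (Corollary 5.9 of the referenced text) rather than proving itself: condition the independent Poisson vector on its sum being $m$ to recover the multinomial law, bound the conditional probability by the unconditional one divided by $\Pr[\sum_i Y_i = m]$, and lower-bound that normalizer by $1/(e\sqrt{m})$ via Stirling. No gaps; the computation of the conditional distribution and the Stirling estimate are both right.
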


\section{Lower Bounds on the Distance}

We first prove a lower bound on $\dist$, when the maintaining algorithm is in the class $C(\mathcal A)$ as defined in Section~\ref{sec:formulation} and for any constant speedup factor $c$. Our proof follows the same structure as a similar proof by Anagnostopoulos \etal \cite{evolving_sorting_first}. We prove the following for $\distt{t}$, for a sufficiently large $t$, where $\dist_{(t)}$ denotes $\dist$ at time $t$. 

\begin{theorem} \label{thm:lower_bound}
For any speedup factor $c \geq 1$ and for all sufficiently large $t$, irrespective of the algorithm $\mathcal{A}$, $\distt{t} = \Omega(n)$ with high probability, even in the case of a random evolver.
\end{theorem}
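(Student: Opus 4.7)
The plan is to first establish $E[\distt{t}] = \Omega(n)$ by an ergodic per-label analysis, and then promote this expectation bound to ``with high probability'' via Lemma~\ref{lem:ch} (Chernoff). The overall approach mirrors the lower-bound proof in Anagnostopoulos~\etal.

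For the per-label step, I will fix a label $l_i$ and track $D_i = d(M_H(l_i), M_T(l_i))$. Per evolver step, the true location $M_T(l_i)$ is one of the endpoints of the chosen random edge with probability $p_i = \deg(M_T(l_i))/(n-1) \geq 1/(n-1)$, and such an event shifts $D_i$ by $\pm 1$; in particular, from the state $D_i = 0$ every touch strictly increases $D_i$. Each algorithm probe that targets $l_i$ reduces $D_i$ by at most $1$. Let $f_i$ denote the long-run average rate at which the algorithm probes $l_i$ per evolver step; the speedup budget imposes $\sum_{i=1}^{n} f_i \leq c$. Writing the stationary balance between the evolver's positive drift at $D_i = 0$ and the algorithm's corrections on $\{D_i \geq 1\}$ (a standard birth-death computation) yields
\[
    E[D_i] \;\geq\; P(D_i \geq 1) \;\geq\; \frac{p_i}{f_i + p_i}.
\]

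Aggregating, since $\sum_i f_i \leq c$, a simple pigeonhole shows that at most $n/2$ labels can satisfy $f_i > 2c/n$, so at least $n/2$ ``lightly-probed'' labels have $f_i \leq 2c/n = O(1/n)$, and for each of these $E[D_i] = \Omega(1/c)$. Summing gives $E[\distt{t}] = \Omega(n)$ for any constant $c$. To upgrade to a with-high-probability statement, I will apply Lemma~\ref{lem:ch} to $\sum_i \mathbf{1}[D_i \geq 1]$ restricted to the $\Omega(n)$ lightly-probed labels (using the Poisson approximation of Lemma~\ref{lem:poisson} to justify treating the indicators as approximately independent when necessary); since $\distt{t}$ dominates this sum, the theorem follows. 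The main technical obstacle is the adaptive nature of the algorithm: the per-label rates $f_i$ are not fixed constants but random, history-dependent quantities that may correlate with the chain state. Following the structure of Anagnostopoulos~\etal, I plan to absorb this by working with time-averaged quantities over a sufficiently long window together with an exchange argument showing that no adaptive schedule can outperform the best static allocation of probes subject to $\sum_i f_i \leq c$; once this is in place, the birth-death bound and the ensuing Chernoff concentration go through.
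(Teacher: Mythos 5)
Your strategy is genuinely different from the paper's: you propose a per-label stationary (birth--death) analysis with long-run probe rates $f_i$, a pigeonhole over the budget $\sum_i f_i \leq c$, and a concentration step, whereas the paper fixes a short window $[t-n/w,\,t]$, splits into cases on $\distt{t-n/w}$, and in the hard case counts edges that are neither already displaced nor touched by the algorithm during the window, showing via the balls-in-bins Poisson approximation (Lemma~\ref{lem:poisson}) plus Lemma~\ref{lem:ch} that the evolver picks $\Omega(n)$ of them exactly once, each such pick leaving a swapped pair at time $t$. Your heuristic identifies the right order of magnitude, but the step you defer --- the ``exchange argument showing that no adaptive schedule can outperform the best static allocation'' --- is not a technicality; it is the entire content of the theorem. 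For an adaptive algorithm, the quantity entering the downward flow in your balance equation is the probe rate \emph{conditioned on} $D_i \geq 1$, not the unconditional rate $f_i$, and an adaptive algorithm (Algorithm~\ref{alg:cap} itself, for instance) concentrates essentially all of its probes on currently displaced labels, so the conditional rate can vastly exceed $2c/n$ even when $f_i \leq 2c/n$. Your pigeonhole controls only the unconditional rates, so $E[D_i]\geq p_i/(f_i+p_i)$ does not follow for the lightly probed labels without the missing argument. Moreover, for an arbitrary deterministic algorithm a stationary regime and well-defined long-run rates need not exist, while the theorem must hold at every sufficiently large $t$.

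The high-probability upgrade also does not go through as described. The indicators $\mathbf{1}[D_i\geq 1]$ at a fixed time are strongly dependent (all are functions of the same evolver and algorithm history), and Lemma~\ref{lem:poisson} does not supply the needed independence: it applies to the joint law of occupancy counts when $m$ balls are thrown i.i.d.\ into $n$ bins --- which is exactly how the paper uses it, for the number of times each edge is chosen by the evolver within a fixed window --- not to steady-state indicators of a complicated Markov chain. Without an independence structure, Lemma~\ref{lem:ch} cannot be applied to $\sum_i \mathbf{1}[D_i\geq 1]$. The paper's window trick is precisely the device that repairs both gaps at once: in $n/w$ time steps the algorithm, however adaptive, touches at most $n/w$ labels, so $\Omega(n)$ edges are provably untouched by it regardless of its strategy, and the randomness that remains (which edges the evolver picks in the window) is exactly of balls-in-bins type. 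To make your argument rigorous you would essentially have to localize to such a window and to such untouched edges, at which point you have reconstructed the paper's proof.
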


\begin{proof}
For ease of analysis we let our algorithm $\mathcal A$ run a single step every time unit, and the evolver, which runs $c$ times slower perform a swap every $c$ time units. Consider the time interval $[t-n/w, t]$, where $w$ is a large constant. The algorithm and the evolver can reduce $\dist$ by at most $n/w$ and $2n/cw$ during this time interval, respectively. So if $\distt{t-n/w}$ was at least $n/w + 2n/cw + \Omega(n)$, then $\distt{t}$ remains $\Omega (n)$.

Next, let us assume $\distt{t - n/w}$ is at most $n/w + 2n/cw + o(n)$. That implies there are at most $n/w + 2n/cw + o(n)$ labels displaced from their true location at time $t-n/w$. Let $L'$ denote the set of displaced labels, that is, $L' = \{l_i \mid D_i > 0\}$. We define $V' = \{M_T(l_i) \mid l_i \in L'\}$, as the set of corresponding vertices on $T$. And then the set of incident edges as $E' = \{(u,v)\mid u \in V' \lor v \in V'\}$. Since the degree of the $T$ is $k$, we have $|E'| \leq k(n/w + 2n/cw + o(n))$.

In the same time frame, the algorithm $\mathcal{A}$ can act on at most $n/w$ labels. Call that set of labels $L_\mathcal{A}$. Define $V_\mathcal{A} = \{M_T(l_i) \mid l_i \in l_\mathcal{A}\}$, as the set of corresponding vertices on $T$. And then the set of incident edges as $E_\mathcal{A} = \{(u,v)\mid u \in V_\mathcal{A} \lor v \in V_\mathcal{A}\}$. Now, $|E_\mathcal{A}| \leq kn/w$.

Next we look at the set of edges that were unaltered at time, $t-n/w$, and were not affected by the algorithm throughout the time interval. Call it $E^* = E \setminus (E' \cup E_\mathcal{A})$. Now, $|E^*|$ $\geq n - 2kn/w - 2nk/cw - k\cdot o(n)$ $\geq n\gamma$, for some sufficiently large $w$, and $\gamma = (1 - 2k/w - 2k/cw - k/w)$. The evolver picking any edge from $E^*$ exactly once, guarantees that the labels stay swapped at the end of the time interval.

Let $X_e$ be the indicator variable, representing the fact that $e$ is picked by the evolver exactly once. We use the Poisson approximation scheme from Lemma~\ref{lem:poisson}. The evolver chooses $n/cw$ edges at random from the $n$ available ones. Therefore $\lambda = (n/cw)/n = 1/cw$, which is a constant. Hence, $\Pr[Y_e = 1] = \lambda e^{- \lambda} = s$, a constant. That implies, $\mathrm{E}[\sum_{e\in E^*} Y_e] \geq s\gamma n$. Using a Chernoff bound (Lemma~\ref{lem:ch}), we have
\[
    \Pr \left[\sum_{e\in E^*} Y_e \leq s\gamma n/2\right] 
        ~ \leq ~ e^{-\Omega(n)}.
\]
Using Lemma~\ref{lem:poisson} again, we have
\begin{align*}
    &\Pr \left[\sum_{e\in E^*} X_e \leq s\gamma n/2\right] \\
        &  \leq ~ e\,\sqrt{\frac{n}{cw}}\Pr \left[\sum_{e\in E^*} Y_e \leq s\gamma n/2\right]\\
        & \leq ~ e\,\sqrt{\frac{n}{cw}}e^{-\Omega(n)} 
        ~ \leq ~ e^{- \Omega(n)}
\end{align*}

Therefore with exponentially high probability, the evolver picks at least $s\gamma n/2$ edges from $E^*$, ensuring that those edges stay swapped at the end of the interval. Therefore, $\distt{t} \geq s\gamma n \in \Omega(n)$, as desired.
\end{proof}

\section{Algorithm}

Here, we describe a simple algorithm to track the labels. We use the same algorithm in both the cases of a random and an adversarial evolver. Recall the set of labels $L = \{l_1, \ldots, l_n\}$, and the definition of the oracle from Section~\ref{sec:formulation}. 

Intuitively, the algorithm works as follows. For each $l_i \in L$, we query the oracle on $(l_i, M_H(l_i))$ and update its location by moving it one step in the direction returned by the oracle. We keep doing this until the oracle returns \textit{null}, that is, when $l_i$ is in its true location. We then move on to the next label, repeating the process indefinitely. A single pass over all the labels is called an \emph{iteration} of the algorithm.

\begin{algorithm}[H]
\caption{Tracking Labels}\label{alg:cap}
\begin{algorithmic}
\State{}\Comment{Continuously run the algorithm}
\For{$j \gets 1,2 ,\ldots, \infty$} 
    \State{}\Comment{For every label in order}
    \For{$i \leftarrow$ 1 to $n$}
        \State{}\Comment{Until the label is in its true location}
        \While{$(\mathcal{O}(l_i, M_H(l_i)) \neq \textit{null})$} 
            \State{}\Comment{Query the oracle to find the direction}
            \State{$(u,v) \gets \mathcal{O}(l_i, M_H(l_i))$} 
            \State{}\Comment{Update the location of the label}
            \State{$M_H(l_i) \gets v$} 
            
        \EndWhile
    \EndFor
\EndFor

\end{algorithmic}
\end{algorithm}

\section{Analysis}

Again for ease of analysis, we let our algorithm $\mathcal A$ run a single step every time unit, and the evolver, which runs $c$ times slower, perform a swap every $c$ time units.

Let $t_0$ be the time when the algorithm starts. Let $t_j$ be the time when the $j$th iteration of the algorithm ends. Let $\dist$ at the start of the $j$th iteration be $\dist_j$. And for a specific label $l_i$ we denote the distance at the start of the $j$th iteration to be $D_{i,j}$.

We set the total number of moves effected on $l_i$, by the algorithm in the $j^{th}$ iteration as $\A_{i,j}$. Therefore the total decrease in $D_i$, the distance with respect to label $l_i$, in the $j^{th}$ iteration is $\A_{i,j}$. We define the total decrease in $\dist$ due to the algorithm, in the $j^{th}$ iteration as $\A_j$, $\A_j = \sum_{l_i \in L} \A_{i,j}$.

We note the following about $\Delta t_j$, the time taken by the $j^{th}$ iteration.

\begin{lemma} \label{lem:t1}
$\Delta t_j = t_j - t_{j-1} = n + \A_j$.
\end{lemma}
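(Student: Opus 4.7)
The plan is to count time units spent in the $j$-th iteration label by label, using the fact that each step of the algorithm from class $\mathcal C$ corresponds to exactly one oracle query (and therefore one time unit).

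First I would describe the structure of the inner \textbf{while} loop for a fixed label $l_i$. Each execution of the loop body issues one query $\mathcal{O}(l_i, M_H(l_i))$; if the oracle returns a non-null edge, the algorithm performs a single move of $l_i$. By the definition of $\A_{i,j}$, exactly $\A_{i,j}$ such non-null queries occur before the oracle finally returns \textit{null} and the loop exits. The terminating null query itself is a separate step and takes one additional time unit. Thus the number of time units spent processing label $l_i$ in iteration $j$ is exactly $\A_{i,j} + 1$.

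Summing over the $n$ labels handled in iteration $j$ then gives
\[
\Delta t_j \;=\; \sum_{i=1}^{n} (\A_{i,j} + 1) \;=\; n + \sum_{i=1}^{n} \A_{i,j} \;=\; n + \A_j,
\]
which is precisely the claim.

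The only subtle point to address is to justify that each inner loop actually terminates within finitely many steps, so that the sum above is well-defined. This follows because the oracle always returns the first edge on the shortest path from $M_H(l_i)$ to $M_T(l_i)$, so each non-null query strictly decreases $d(M_H(l_i), M_T(l_i))$ by $1$; meanwhile, during the processing of $l_i$, no other action can change $M_H(l_i)$, and the evolver (running $c$ times slower) can increase $d(M_H(l_i), M_T(l_i))$ by at most a bounded amount per unit time. Hence the loop is guaranteed to terminate, $\A_{i,j}$ is finite, and the counting argument above goes through verbatim. No probabilistic tools are needed here; the lemma is purely a bookkeeping identity between wall-clock time and total algorithm work in an iteration.
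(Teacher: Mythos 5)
Your proof is correct and takes essentially the same approach as the paper: each oracle query is one time unit, each label costs $\A_{i,j}$ moving steps plus one terminating \textit{null} query, and summing over the $n$ labels gives $n + \A_j$. The added remark on termination of the inner loop is a reasonable (and correct, for the speedups $c \geq 2$ actually used) piece of bookkeeping that the paper leaves implicit.
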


\begin{proof}
Every step of the algorithm either moves a label in the direction of its true location, or fixes it, i.e., finds the label is in its true location. Since there are $n$ labels, and $\A_j$ is the total moves effected by the algorithm, we have the result
\end{proof}

Next we show a lower bound for the time taken by the $j^{th}$ iteration.

\begin{lemma} \label{lem:t2}
$\Delta t_j \geq \frac{c}{2+c} (\dist_j + n)$.
\end{lemma}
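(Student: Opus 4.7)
The plan is to combine Lemma~\ref{lem:t1}, namely $\Delta t_j = n + \A_j$, with a lower bound on $\A_j$ in terms of $\dist_j$. The key observation is that during iteration $j$ the algorithm must, at some moment, bring every label to its true location, so the total work $\A_j$ is forced to equal $\dist_j$ plus a correction absorbing the evolver's net interference. That correction is bounded by how many swaps the evolver can make within the time window $\Delta t_j$, which is what couples the two sides of the inequality.

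First I would set up a per-label accounting. For each label $l_i$, let $\tau_i^+$ denote the instant during iteration $j$ when the algorithm finishes processing $l_i$; by construction $D_i = 0$ at this instant, since the algorithm only stops when the oracle returns \textit{null}. Let $\delta_{i,j}$ be the net change that the evolver contributes to $D_i$ over the interval $[t_{j-1},\tau_i^+]$. Over the same interval, the only other contribution to $D_i$ comes from the $\A_{i,j}$ algorithmic moves on $l_i$, each reducing $D_i$ by exactly one. Equating the net change yields
\[
0 \;=\; D_{i,j} - \A_{i,j} + \delta_{i,j},
\]
so $\A_{i,j} = D_{i,j} + \delta_{i,j}$. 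Summing over all labels gives $\A_j = \dist_j + \sum_i \delta_{i,j}$.

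Next I would bound $\sum_i \delta_{i,j}$ from below. A single evolver swap affects exactly two labels and changes each of their $D_i$ values by at most $1$ in absolute value, so one swap contributes at most $2$ in absolute value to $\sum_i \delta_{i,j}$. Since the evolver performs $\Delta t_j / c$ swaps during the iteration, we obtain $\sum_i \delta_{i,j} \geq -2\Delta t_j / c$, hence $\A_j \geq \dist_j - 2\Delta t_j / c$. Combining with Lemma~\ref{lem:t1},
\[
\Delta t_j \;=\; n + \A_j \;\geq\; n + \dist_j - \tfrac{2}{c}\,\Delta t_j,
\]
which rearranges to $\Delta t_j(1 + 2/c) \geq \dist_j + n$, i.e. $\Delta t_j \geq \frac{c}{c+2}(\dist_j + n)$, as claimed.

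The only delicate point is the identity $\A_{i,j} = D_{i,j} + \delta_{i,j}$ in degenerate cases, for instance when the evolver happens to drop $D_i$ to $0$ before the algorithm ever reaches $l_i$ and so $\A_{i,j}=0$. I would emphasize that the identity is purely a net-change statement between the two endpoints $t_{j-1}$ and $\tau_i^+$, independent of intermediate fluctuations, so it closes correctly in all cases. Because the bound on the evolver's swap count does not depend on how those swaps are chosen, the argument applies uniformly to both the random and adversarial evolvers.
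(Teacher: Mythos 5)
Your proof is correct and follows essentially the same route as the paper's: a per-label accounting of the net change in $D_i$ up to the moment the algorithm finishes processing $l_i$, attributing the residual $D_{i,j}-\A_{i,j}$ to the evolver, bounding the evolver's total effect by $2$ per swap, and combining with Lemma~\ref{lem:t1}. The only cosmetic difference is that you convert the available time into a bound on the evolver's total interference, whereas the paper converts the required interference into a lower bound on the evolver's time; these are the same inequality rearranged.
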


\begin{proof}
For a specific label $l_i$, our algorithm reduces its distance by $\A_{i,j}$, then finds that the label is at its true location, and then moves on to the next label. This implies that for some subset of steps taken by the evolver, the distance associated with $l_i$ was reduced by $D_{i,j} - \A_{i,j}$. Otherwise, the algorithm would not have moved on to the next label.

This further implies that in the $j^{th}$ iteration for some subset of its steps, the evolver reduced the overall distance by at least $\sum_i (D_{i,j} - \A_{i,j})$ = $\dist_j - \A_j$. That takes the evolver at least $(\dist_j - \A_j)/2$ steps, or at least $(c/2)(\dist_j - \A_j)$ time.

Therefore we have $\Delta t_j \geq (c/2)(\dist_j - \A_j)$. Using Lemma~\ref{lem:t1}, we have $\Delta t_j \geq \frac{c}{2}(\dist_j - \Delta t_j + n)$. Simplifying the inequality gives us the desired result
\end{proof}

\subsection{Random Evolver and Speedup $\mathbf 2$} \label{analysis1}

In this section we prove the following: In the case of a random evolver, where the evolver $\mathcal E$ picks an edge at random and swaps its labels, an algorithm that runs at least twice as fast as the evolver maintains an optimal distance. Or in other words, we show that for $c \geq 2$, our algorithm ensures $\dist \in O(n)$ with high probability. Using Theorem~\ref{thm:lower_bound}, we can conclude that our algorithm is optimal for $c \geq 2$ and a random evolver.

As in \cite{evolving_sorting}, we first prove an interesting result about the random evolver. We show that a constant fraction of the steps taken by the random evolver do not increase the overall distance $\dist$. 

\begin{lemma} \label{lem:const}
For $c = 2$ and degree $k$, there exists a constant $\epsilon$, $0<\epsilon <1$, such that for all $j$, the random evolver does not increase the overall distance in at least $\epsilon \Delta t_j$ steps in the $j^{\text th}$ iteration, with high probability.
\end{lemma}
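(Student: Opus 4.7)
The plan has three parts: a pointwise combinatorial bound on the number of non-increasing edges, a drift estimate for the displaced-label count, and a concentration step.

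Fix a time $t$ during iteration $j$. A swap across edge $e=(u,v)$ with labels $a$ at $u$ and $b$ at $v$ changes $\dist$ by the sum of two $\pm 1$ contributions, one for each label. The contribution of $a$ is $-1$ iff $M_H(a)$ lies on the $v$-side of $e$ (including $v$ itself); the analogous statement holds for $b$. For each displaced label $l$, exactly one edge at its current vertex $M_T(l)$---the first edge on the path toward $M_H(l)$---produces a $-1$ contribution, while correctly-placed labels produce none. Double counting gives that the total number of $-1$ contributions across all edges equals $D_0(t)$, the current count of displaced labels, and hence the number of edges whose swap is non-increasing (i.e., yields at least one $-1$ contribution) is at least $D_0(t)/2$. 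Since the random evolver picks its edge uniformly from $n-1$ choices, the probability that its $s$th swap is non-increasing is at least $D_0(t_s)/(2(n-1))$, giving
\[
    \mathbb{E}[N_j] ~\geq~ \frac{1}{2(n-1)} \, \mathbb{E}\Big[\sum_{s=1}^{\Delta t_j/2} D_0(t_s)\Big],
\]
where $N_j$ denotes the number of non-increasing evolver swaps in iteration $j$.

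The main work is showing $\sum_s D_0(t_s) = \Omega(n\,\Delta t_j)$ with high probability. The key drift observation is that when $D_0(t)$ is a small fraction of $n$, a uniformly random evolver swap picks an edge whose two endpoints are both correctly placed with probability close to $1$; each such swap increases $D_0$ by $2$, giving $D_0$ a positive drift of order $\Theta(1)$ per evolver step. The algorithm removes at most one displacement per step, and with $c=2$ the algorithm's corrective rate is dominated by the evolver's displacement rate so long as $D_0$ is small. A concentration argument via Poisson approximation (Lemma~\ref{lem:poisson}), paralleling the proof of Theorem~\ref{thm:lower_bound}, then shows that after $O(n)$ evolver steps from any starting state, $D_0\geq \gamma n$ for some constant $\gamma>0$ with high probability, and remains $\Omega(n)$ thereafter. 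Because $\Delta t_j\geq n$ by Lemma~\ref{lem:t1}, this burn-in is at most a constant fraction of the iteration, so the averaging bound yields $\mathbb{E}[N_j]=\Omega(\Delta t_j)$.

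Finally, conditioned on the realized sequence of intermediate states, the indicators $Y_s=\mathbf{1}[\text{swap } s\text{ is non-increasing}]$ are independent Bernoullis because the evolver's successive edge choices are independent; a Chernoff bound (Lemma~\ref{lem:ch}) then upgrades the expectation bound into $N_j\geq \epsilon\,\Delta t_j$ with high probability for a suitable constant $\epsilon>0$. I expect the drift analysis of $D_0(t)$ to be the main obstacle: the algorithm's actions are dictated by its own internal schedule and interact nontrivially with the evolver's random updates, so cleanly establishing the positive drift without circularity will likely require either a coupling argument or a careful separation of time windows, in the same spirit as the proof of Theorem~\ref{thm:lower_bound}.
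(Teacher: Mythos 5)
Your opening combinatorial step is correct and self-contained: summing the $-1$ contributions over (edge, endpoint-label) pairs gives exactly $D_0(t)$, so at least $D_0(t)/2$ edges are non-increasing at any instant, and a uniformly random swap is non-increasing with probability at least $D_0(t)/(2(n-1))$. But this reduces the lemma to showing that $D_0(t) = \Omega(n)$ for a constant fraction of every iteration, and that is exactly the part you leave as a sketch. It is not a routine corollary of Theorem~\ref{thm:lower_bound} (a large $\dist$ does not imply a large $D_0$ --- a few labels displaced far could account for all of it), and the drift heuristic is genuinely delicate: the algorithm's corrective action on $D_0$ is governed by its deterministic sweep order, which is correlated with which labels the random evolver has displaced, so $(D_0(t))$ is not a Markov chain in $D_0$ alone; moreover a single evolver swap can \emph{decrease} $D_0$ by returning a label to a vertex where its hypothesis already sits. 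You would need to control all of this uniformly over every iteration $j$, including iterations starting from $D_0=0$ and from adversarially bad hypothesis configurations, showing both that the burn-in is a small constant fraction of $\Delta t_j$ and that $D_0$ does not subsequently dip. As written, this is the crux of your proof and it is missing --- you say so yourself. A second, smaller defect: conditioning ``on the realized sequence of intermediate states'' and then treating the indicators $Y_s$ as independent is circular, since those states are functions of the very swaps being conditioned on; the right tool is a martingale inequality (Azuma or a submartingale Chernoff bound), not Lemma~\ref{lem:ch}, which requires genuinely independent indicators.

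The paper sidesteps the state-dependence entirely with a different device. In each window of $n/10k$ algorithm steps, with exponentially high probability (via Lemmas~\ref{lem:poisson} and~\ref{lem:ch}) there are $\Omega(n)$ edges that are untouched by the algorithm, swapped at least twice by the evolver, and have no incident edge swapped. For such an edge the two swaps cancel, and since each swap changes $\dist$ by $-2$, $0$, or $+2$, at least one of the two must be non-increasing. This argument is oblivious to $D_0$ and to the hypothesis configuration, so it applies uniformly to every window of every iteration with no burn-in and no drift analysis. If you want to pursue your route, the $D_0=\Omega(n)$ lemma is where essentially all of the work lies, and you should expect it to be at least as involved as the proof of Theorem~\ref{thm:lower_bound}.
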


\begin{proof}
From Lemma~\ref{lem:t1}, we know $\Delta t_j$ is at least $n$. We look at the first $n/10k$ steps of this particular iteration. The algorithm can process at most $n/10k$ nodes in this time. The number of edges incident on these nodes is at most $n/10$. Let $E'$ denote the set of edges left unaltered by the algorithm in this time interval. Then $|E'| \geq 9n/10$. In the same time period, the evolver picks edges at random from the edge set $E$, $n/20k$ times with replacement.

For every edge $e$ in $E$, we set $X_e = 1$, if $e \in E'$, and the evolver picks $e$, at least twice in the time-frame, but picks none of the edges incident on $e$. 

We use the Poisson approximation scheme from Lemma~\ref{lem:poisson}. The evolver chooses $n/20k$ edges from the $n$ available ones. Therefore $\lambda = (n/20k)/n = 1/20k$, which is a constant. 
Now let $Y_e$ be the independent Poisson approximations of $X_e$, with $\lambda = 1/20k$. 

Next we find $\Pr[Y_e = 1]$. That represents the event when $e$ is picked from $E'$, and $e$ is picked twice but none of the edges incident on $e$ are picked. In the Poisson approximation scenario, each edge is picked $j$ times with a probability $e^{- \lambda} \lambda ^j / j!$. Therefore, the probability that an edge is picked at least twice is $(1- e^{-\lambda} - \lambda e^{-\lambda})$, and the probability that it is not picked whatsoever is $e^{-\lambda}$.  Since at most $2k$ edges can be incident on $e$, we have
\[
    \Pr[Y_e = 1] 
        ~ \geq ~ \frac{9}{10}\left(1- e^{-\lambda} - \lambda e^{-\lambda}\right)e^{-2k \lambda}.
\]

Since the right hand side is a constant, there exists $s = O(1)$ such that $\Pr[Y_e = 1] \geq s$. Therefore, $\mathrm{E}[\sum_{e\in E} Y_e] \geq s n$. Using a Chernoff bound (Lemma~\ref{lem:ch}), we have
\[
    \Pr \left[\sum_{e\in E} Y_e \leq sn/2\right] 
        ~ \leq ~ e^{-\Omega(n)}.
\]
Using Lemma~\ref{lem:poisson} again, we have
\begin{align*}
    \Pr \left[\sum_{e\in E} X_e \leq sn/2\right] 
        &  \leq ~ e\,\sqrt{\frac{n}{20k}}\Pr \left[\sum_{e\in E} Y_e \leq sn/2\right]\\
        & \leq ~ e\,\sqrt{\frac{n}{20k}}e^{-\Omega(n)} 
        ~ \leq ~ e^{- \Omega(n)}.
\end{align*}

We note that if $X_e = 1$, then $e$ is left unaltered by the algorithm, but it is altered at least twice by the evolver. That further means, one of those steps by the evolver either decreases the overall distance $\dist$ or leaves it unchanged. And since the number of such edges $e$, with $X_e = 1$, is at least $s n/2$ with exponentially high probability, we conclude that in at least $s n/2$ of the evolver steps, in the first $n/10k$ steps of the iteration, $\dist$ does not increase. Dividing the iteration into chunks of $n/10k$ steps, we obtain the desired result.
\end{proof}

Finally we prove one of the main theorems of this paper, that for a long enough passage of time, $\dist$ converges to $O(n)$, in the case of $c=2$, and a random evolver.

\begin{theorem} \label{thm:fin1}
Given a tree of size $n$ and a constant degree, and a random evolver, there exists $z$ (a function of $n$) such that for all $j > z$, Algorithm~\ref{alg:cap} achieves $\dist_{j} \in O(n)$, with a speed-up factor $c=2$.
\end{theorem}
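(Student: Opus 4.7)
The plan is to combine Lemmas~\ref{lem:t1}, \ref{lem:t2}, and~\ref{lem:const} into a contracting linear recurrence for $\dist_j$, iterate it $O(\log n)$ times to hit the $O(n)$ regime, and union-bound to turn the per-iteration high-probability claim into a ``for all $j > z$'' statement.

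First I would bound $\dist_{j+1}$ in terms of $\dist_j$, $\A_j$, and $\Delta t_j$. During iteration $j$, the algorithm makes $\Delta t_j$ steps that together reduce $\dist$ by exactly $\A_j$ (each non-null step reduces the distance of some label by one, by the oracle's correctness). The evolver performs $\Delta t_j/c = \Delta t_j/2$ swaps, each of which changes $\dist$ by at most $+2$; by Lemma~\ref{lem:const}, at least $\epsilon\,\Delta t_j$ of them do not increase $\dist$, so the remaining at most $\Delta t_j/2 - \epsilon\,\Delta t_j$ together contribute at most $2(\Delta t_j/2 - \epsilon\,\Delta t_j) = \Delta t_j(1 - 2\epsilon)$. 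With high probability this gives
\[
    \dist_{j+1} ~\leq~ \dist_j - \A_j + \Delta t_j(1 - 2\epsilon).
\]

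Next, substitute $\Delta t_j = n + \A_j$ from Lemma~\ref{lem:t1}; the two occurrences of $\A_j$ cancel exactly, leaving $\dist_{j+1} \leq \dist_j + n - 2\epsilon\,\Delta t_j$. Applying Lemma~\ref{lem:t2} with $c=2$, i.e., $\Delta t_j \geq (\dist_j + n)/2$, produces the clean contraction
\[
    \dist_{j+1} ~\leq~ (1-\epsilon)(\dist_j + n).
\]
Setting $\Phi_j = \dist_j + n$, this reads $\Phi_{j+1} \leq (1-\epsilon)\Phi_j + n$, a linear recurrence with fixed point $n/\epsilon$ obeying $\Phi_j - n/\epsilon \leq (1-\epsilon)^j(\Phi_0 - n/\epsilon)$. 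Since $\Phi_0 \leq n^2 + n$, choosing $z = \Theta(\log n)$ drives $(1-\epsilon)^z \Phi_0 = O(1)$, so $\dist_j \leq n/\epsilon + O(1) = O(n)$ for every $j > z$. Lemma~\ref{lem:const} fails with probability only $e^{-\Omega(n)}$ per iteration, so a union bound over any polynomial number of iterations preserves the high-probability conclusion.

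The hard part is really the accounting that makes the recurrence close. The cancellation of $\A_j$ rests on the exact identity in Lemma~\ref{lem:t1}; without it, the algorithm's own progress would appear on both sides of the inequality and block the contraction. Moreover, the contraction rate $1-\epsilon$ is strictly less than $1$ only because Lemma~\ref{lem:const}'s constant $\epsilon$ is bounded away from zero \emph{and} the Lemma~\ref{lem:t2} bound $\Delta t_j \geq (\dist_j + n)/2$ matches the coefficient $1-2\epsilon$ in the evolver term; this alignment is exactly what singles $c = 2$ out as the threshold where the argument succeeds.
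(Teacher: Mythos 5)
Your proposal is correct and follows essentially the same route as the paper: bound the evolver's net increase per iteration via Lemma~\ref{lem:const}, cancel $\A_j$ using the identity $\Delta t_j = n + \A_j$ from Lemma~\ref{lem:t1}, apply Lemma~\ref{lem:t2} with $c=2$ to get a geometric contraction, and unroll for $O(\log n)$ iterations. The only differences are cosmetic (you keep the slightly sharper constant $1-2\epsilon$ on the evolver's increase and phrase the unrolling as a fixed-point recurrence), and your explicit union-bound remark is a small point the paper leaves implicit.
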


\begin{proof}
Consider the $j^{\text th}$ iteration. From Lemma~\ref{lem:const}, the evolver increases $\dist$ by at most $(1-\epsilon)\Delta t_j$. In the same iteration the algorithm reduces $\dist$ by $\A_j$. Therefore, with high probability:
\begin{align*} 
    &\dist_{j+1} \\
        & \leq ~ \dist_{j} + (1-\epsilon)\Delta t_j - \A_j \\
        &  \leq ~ \dist_{j} + n -\epsilon\Delta t_j & \text{[Lemma~\ref{lem:t1}]}\\
        &  \leq ~ \left(1 - \frac{\epsilon}{2}\right)\dist_{j} + \left(1 - \frac{\epsilon}{2}\right)n & \tag*{\text{[$c=2$ in Lemma~\ref{lem:t2}]}}\\
        &  =    ~ \left(1 - \frac{\epsilon}{2}\right)^j \dist_0 + \sum_{s=1}^{j}\left(1 - \frac{\epsilon}{2}\right)^j n\\
        &  \leq ~ \left(1 - \frac{\epsilon}{2}\right)^j n^2 + O(n). & \tag*{\text{[since $\dist \leq n^2$}]}
\end{align*}
By choosing $z = \log _{1/(1-\epsilon /2)} n$, we have $\dist_{z+1} \in O(n)$.
\end{proof}

\emph{Remark:}
We showed that for large enough $j$, $\dist_j \in O(n)$. Can we conclude the same about $\dist$ throughout the $j^{th}$ iteration as well? In particular we look at $\dist_{i,j}$. We note that in our Algorithm~\ref{alg:cap} we could have started with processing the label $l_i$ first (instead of $l_1$), $l_{i+1}$ next, and so on. Therefore for a large enough $j$, $\dist_{i,j} \in O(n)$ as well. Since $\dist_{i+1,j} \leq \dist_{i,j} + O(n)$, we conclude that for a large enough passage of time $\dist$ converges to $O(n)$.


In our labeled hypothesis tree $\mathcal H$ multiple labels could reside at a particular node. We show a simple result on the maximum number of labels that could be mapped to single vertex in $T$.

\begin{cor}
\label{cor:max}
Let $L_{\mathcal H,v}$ be the set of labels residing at a node $v$ in $\mathcal H$, after a long enough passage of time. Then, $|L_{\mathcal H,v}| \in O(\sqrt{n})$
\end{cor}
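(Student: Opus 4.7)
The plan is to exploit Theorem~\ref{thm:fin1}, which gives $D(\mathcal T,\mathcal H) \in O(n)$ with high probability after sufficient time, and convert this global distance budget into a local bound on the number of labels that can pile up at a single vertex. Fix a vertex $v$ and set $m := |L_{\mathcal H, v}|$. Because $M_T$ is a bijection, the $m$ labels currently mapped to $v$ in $\mathcal H$ correspond to $m$ distinct true positions $u_1,\ldots,u_m$ in $T$, and each contributes $d(v, u_i)$ to $\dist$. Hence
\[
\sum_{i=1}^{m} d(v, u_i) \;\leq\; D(\mathcal T,\mathcal H) \;=\; O(n).
\]

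Next I would lower-bound $\sum_{i=1}^{m} d(v, u_i)$ as a function of $m$ alone. Sort the $u_i$ by increasing tree-distance from $v$. Since $T$ has bounded degree $k$, the closed ball of radius $d$ around $v$ contains at most $O(k^d)$ vertices (and only $2d+1$ in the path-like worst case). Pigeonhole then forces the $i$-th smallest distance $d(v, u_i)$ to grow with $i$; in the path-like regime this growth is $\Omega(i)$, summing to $\sum_i d(v, u_i) = \Omega(m^2)$. Combining the two inequalities yields $m^2 = O(n)$, i.e., $|L_{\mathcal H, v}| \in O(\sqrt n)$.

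The main obstacle is making the $\Omega(m^2)$ lower bound on $\sum_i d(v, u_i)$ uniform across all bounded-degree trees. For degree-$2$ (path) neighborhoods the argument is immediate, but when balls around $v$ grow exponentially as $(k-1)^d$, the plain ball-counting pigeonhole only yields $\sum_i d(v, u_i) = \Omega(m \log m)$, which by itself would give the weaker conclusion $m = O(n/\log n)$. To close the gap I would look for a finer shell-by-shell accounting---tracking how many of the $u_i$ can lie in each radius-$d$ annulus and invoking an exchange argument that forces the farthest labels to contribute super-linearly---thereby recovering the quadratic lower bound in constant-degree trees. Once this geometric inequality is in hand, the corollary follows by a single substitution into the $O(n)$ budget from Theorem~\ref{thm:fin1}.
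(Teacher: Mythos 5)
Your proposal follows exactly the same route as the paper's proof: invoke Theorem~\ref{thm:fin1} to get $\dist \in O(n)$, note that the $m = |L_{\mathcal H,v}|$ labels at $v$ have distinct true locations $u_1,\ldots,u_m$ because $M_T$ is a bijection, and try to lower-bound $\sum_i d(v,u_i)$ by $\Omega(m^2)$ via counting how many of the $u_i$ can lie at each distance from $v$. The paper completes the argument by asserting that at most $k$ of the distances can equal $1$, at most $k$ can equal $2$, ``and so on,'' giving $\sum_i d(v,u_i) \geq k\bigl(1+2+\cdots+(m-1)/k\bigr) = \Omega(m^2/k)$ and hence $m \in O(\sqrt{n})$.

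The step you flag as the ``main obstacle'' is a genuine gap in your write-up --- you never establish the $\Omega(m^2)$ inequality --- but your suspicion about it is exactly right, and the obstacle is not one you can remove by a ``finer shell-by-shell accounting.'' In a tree of maximum degree $k \geq 3$ the sphere of radius $d$ about $v$ can contain up to $k(k-1)^{d-1}$ vertices, so the ``at most $k$ per distance'' claim is justified only at $d=1$ (or for paths, $k=2$). Worse, the target inequality is simply false at this level of generality: if $v$ is the root of a complete binary tree and the $u_i$ are the $m$ vertices nearest $v$, then $\sum_i d(v,u_i) = \Theta(m\log m)$, so no argument using only $\dist \in O(n)$ plus distinctness of the $u_i$ can yield better than $m \in O(n/\log n)$ when $k \geq 3$. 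Any repair must either import extra information about which configurations the algorithm's dynamics can actually reach, or settle for the weaker bound. So your proposal reproduces the paper's argument and honestly exposes the one step the paper asserts without justification --- a step that, as written, does not hold for branching trees.
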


\begin{proof}
\label{cor:max:proof}
Let $|L_{\mathcal H,v}| = w$.  For $l_i$'s, $l_i \in L_{\mathcal H,v}$, we consider the corresponding distances $D_i$'s. Consider that set as $D_v$, $D_v = \{D_i|l_i \in L_{H,v}\}$. Since the tree has degree has $k$, there can be at most $k$ 1's in $D_v$, similarly $k$ number of 2's, and so on. At most one member of $D_v$ can be zero. Therefore 
\[
    \dist 
        ~ \geq ~ \sum_{x \in D_v} x 
        ~ \geq ~ k(1 + 2 + \cdots + (w-1)/k) \in \Omega(w^2).
\]
Since $\dist \in O(n)$ after a long enough time from Theorem~\ref{thm:fin1}, we conclude $w \in O(\sqrt{n})$.
\end{proof}

\subsection{Adversarial Evolver and Speedup $\mathbf{> 2}$} \label{analysis2}

We conclude with the case when the evolver is adversarial. That means we cannot rely on a result similar to Lemma~\ref{lem:const}. We show that for a speedup factor of $c>2$, or in other words, if there exists $\delta \in \mathbb{R}$, such that $c = 2+\delta$, we can still maintain an optimal distance. 

\begin{theorem}\label{thm:fin2}
Given a tree of size $n$, an adversarial evolver, there exists $z$ (a function of $n$) such that for all $j > z$, Algorithm~\ref{alg:cap} achieves $\dist_{j} \in O(n)$, with any speed-up factor $c>2$.
\end{theorem}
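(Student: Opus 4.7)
The plan is to mirror the structure of the proof of Theorem~\ref{thm:fin1}, but since we can no longer invoke Lemma~\ref{lem:const} (which exploited randomness), I would fall back on the deterministic worst-case bound: each evolver step raises $\dist$ by at most $2$. During the $j^{\text{th}}$ iteration the evolver performs exactly $\Delta t_j / c$ swaps, so the total increase in $\dist$ attributable to the evolver is at most $2\Delta t_j / c$, while the algorithm decreases $\dist$ by $\A_j$.

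Combining these with Lemma~\ref{lem:t1} (which lets us write $\A_j = \Delta t_j - n$), I obtain
\begin{align*}
\dist_{j+1}
  & \leq \dist_j + \tfrac{2}{c}\,\Delta t_j - \A_j \\
  & = \dist_j + n - \tfrac{c-2}{c}\,\Delta t_j.
\end{align*}
Now applying Lemma~\ref{lem:t2}, $\Delta t_j \geq \tfrac{c}{c+2}(\dist_j + n)$, and setting $\alpha = \tfrac{c-2}{c+2} = \tfrac{\delta}{\delta+4} \in (0,1)$, gives the contraction
\[
\dist_{j+1} \leq (1-\alpha)(\dist_j + n).
\]
Iterating this recurrence exactly as in Theorem~\ref{thm:fin1}, starting from the trivial bound $\dist_0 \leq n^2$, yields $\dist_{j} \leq (1-\alpha)^{j}n^2 + n/\alpha$. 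Choosing $z = \log_{1/(1-\alpha)} n$, which is $O(\log n)$ since $\alpha$ is a positive constant depending only on $\delta$, we conclude $\dist_j \in O(n)$ for all $j > z$.

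The main subtlety, rather than a true obstacle, is verifying that the coefficient $\alpha$ is strictly positive; this is exactly where the hypothesis $c > 2$ (equivalently $\delta > 0$) is essential, and it also explains why this clean geometric decay argument fails to recover the $c = 2$ case handled in Theorem~\ref{thm:fin1}. A minor bookkeeping point: because the bound $2\Delta t_j / c$ is deterministic, this proof does not require the ``with high probability'' qualifier that appeared in the random-evolver analysis, so the conclusion holds unconditionally against any adversarial evolver.
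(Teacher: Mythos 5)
Your proof is correct and follows essentially the same route as the paper's: the deterministic bound of $2\Delta t_j/c$ on the evolver's damage, combined with Lemma~\ref{lem:t1} and Lemma~\ref{lem:t2}, yields exactly the paper's contraction $\dist_{j+1} \leq \frac{4}{2+c}(\dist_j + n)$, which you write equivalently as $(1-\alpha)(\dist_j+n)$ with $\alpha = \frac{c-2}{c+2}$. Your added observations (why $c>2$ is essential, and that the conclusion is unconditional rather than high-probability) are accurate.
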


\begin{proof}
Consider the $j^{\text th}$ iteration. The evolver increases $\dist$ by at most $\frac{2\Delta t_j}{c}$. Therefore
\begin{align*} 
    &\dist_{j+1} \\
    &\leq \dist_{j} + \frac{2\Delta t_j}{c} - \A_j \\
        &  =    ~ \dist_{j}  + \frac{2\Delta t_j}{c} + n - \Delta t_j \tag*{\text{[Lemma~\ref{lem:t1}]}}\\
        &  =    ~ \dist_{j} + n - \left(1-\frac{2}{c}\right)\Delta t_j.
\end{align*}
By applying Lemma~\ref{lem:t2}, we have
\begin{align*}
    &\dist_{j+1} \\
        &  \leq ~ \dist_{j} + n - \frac{c-2}{2+c} \left(\dist_j + n\right) \tag*{\text{[Lemma~\ref{lem:t2}]}}\\
        &  =    ~ \frac{4}{2+c}\dist_j + \frac{4}{2+c}n\\
        &  \leq ~ \left(\frac{4}{2+c}\right)^j \dist_0 + \sum_{s=1}^{j} \left(\frac{4}{2+c}\right)^j n\\
        &  \leq ~ \left(\frac{4}{2+c}\right)^j n^2 + O(n). \tag*{\text{[$c>2$, $\dist \leq n^2$]}}
\end{align*}
For $c > 2$ , and by choosing $j > \log _{\frac{c+2}{4}} n$, we have $\dist_{j+1} = O(n)$.
\end{proof}

\subsection{Adversarial Evolver and Speedup $ \mathbf{< 2}$}

We adapt a construction from Biniaz \etal ~\cite{tokenadversary} to prove a lower bound on the required speed-up to ensure  $\distt{t} \in O(n)$. Construct two configurations of a labeled tree $\mathcal T_0$, and $\mathcal T_1$ as in Figure~\ref{fig:adv}. On such a tree: $D(\mathcal T_1, \mathcal T_0) \sim 2~OPT$, where $OPT$ is the number of optimum swaps required to go from one configuration to the other. Intuitively, an algorithm running at a speed-up factor less than 2, will fail to catch up with an adversarial evolver that takes $OPT$ swaps to modify $\mathcal T_0$, to $\mathcal T_1$. We can show that any algorithm from class $\mathcal C$ running with speed-up $2-\delta$, where $\delta$ is a small positive constant, cannot achieve $\dist \in O(n)$. In fact we can prove something stronger:
\begin{theorem}[Lower bounds on speed-up]
\label{thm:twominus}
    Given any time instant $t_0$, there exists a tree $T$, an adversarial evolver $\mathcal E$,  and a time instant $t>t_0$ s.t. $\distt{t} \in \Omega(n^2)$, for any algorithm from class $\mathcal C$, which runs with a speedup $2-\delta$, where $\delta$ is a positive constant
\end{theorem}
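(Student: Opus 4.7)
The plan is to instantiate the sketch in the paragraph preceding the theorem and convert it into a formal bound. First, import the construction adapted from Biniaz \etal\ \cite{tokenadversary}, which yields a tree $T$ on $n$ vertices and two labelings $\mathcal{T}_0,\mathcal{T}_1$ satisfying (i) $D(\mathcal{T}_0,\mathcal{T}_1) = 2\,\mathit{OPT} - o(n^2)$, where $\mathit{OPT}$ denotes the minimum number of adjacent swaps needed to transform $\mathcal{T}_0$ into $\mathcal{T}_1$, and (ii) $\mathit{OPT} = \Theta(n^2)$. Since a single swap can decrease $D$ by at most $2$, condition (i) forces the optimal swap sequence to be essentially maximally efficient: every swap simultaneously advances both exchanged labels toward their targets.

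Next, define the adversary so that the evolver $\mathcal E$ cycles forever between the two configurations. Starting from $\mathcal{T}_0$ it applies an optimal length-$\mathit{OPT}$ swap sequence to reach $\mathcal{T}_1$, then replays that sequence in reverse to return to $\mathcal{T}_0$, and repeats. Each half-cycle uses exactly $\mathit{OPT}$ evolver steps, which in algorithm time corresponds to $c\cdot \mathit{OPT} = (2-\delta)\,\mathit{OPT}$ steps of any $\mathcal A \in \mathcal C$.

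Now consider any $\mathcal A \in \mathcal C$ and let $\mathcal H_t$ denote its hypothesized labeling at algorithm time $t$. Pick two consecutive half-cycle boundaries $t_1 < t_2$ with $\mathcal T_{t_1}=\mathcal T_0$ and $\mathcal T_{t_2}=\mathcal T_1$. Since every step of an algorithm in class $\mathcal C$ moves at most one label by a single edge, the sum-of-tree-distances can change by at most $1$ per step, giving $D(\mathcal H_{t_1},\mathcal H_{t_2}) \leq t_2 - t_1 = (2-\delta)\,\mathit{OPT}$. Applying the triangle inequality for the metric $D$,
\begin{align*}
    D(\mathcal H_{t_1}, \mathcal T_0) + D(\mathcal H_{t_2}, \mathcal T_1)
        &\geq D(\mathcal T_0,\mathcal T_1) - D(\mathcal H_{t_1},\mathcal H_{t_2})\\
        &\geq \delta\,\mathit{OPT} - o(n^2),
\end{align*}
so at least one of $\distt{t_1},\distt{t_2}$ is $\Omega(n^2)$. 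Because the adversary can launch this cycle at any moment, for every initial time $t_0$ we obtain some $t > t_0$ with $\distt{t}\in\Omega(n^2)$, as required.

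The principal obstacle is step (i): verifying that the Biniaz \etal\ construction, once adapted to our setting, genuinely achieves the near-equality $D(\mathcal T_0,\mathcal T_1)\sim 2\,\mathit{OPT}$ with $\mathit{OPT} = \Theta(n^2)$, and making the evolver's bidirectional schedule precise (including that reversing the optimal sequence is itself optimal, which follows since a swap is its own inverse). Once those structural properties are in hand, everything else reduces to the two-line triangle-inequality calculation above, and the positivity of the constant $\delta$ is exactly what keeps the right-hand side in $\Omega(n^2)$.
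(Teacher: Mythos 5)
Your approach is the same in its essentials as the paper's: the Biniaz \etal\ wings-and-tails tree, an evolver that executes an optimal swap sequence from $\mathcal T_0$ to $\mathcal T_1$, and a triangle-inequality argument exploiting that the algorithm can change $\dist$ by at most $1$ per step while the gap $D(\mathcal T_0,\mathcal T_1)$ is close to twice the number of evolver steps. Your framing is actually more robust in one respect: the paper assumes $D(\mathcal T_0,\mathcal H_0)=0$ at the start of the attack, whereas your cycling adversary bounds $D(\mathcal H_{t_1},\mathcal H_{t_2}) \le t_2-t_1$ directly and concludes that one of the two half-cycle endpoints must be bad, which handles an arbitrary hypothesis state and genuinely delivers the ``for any $t_0$ there is a later bad time $t$'' quantifier.

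The one genuine problem is your property (i). For this construction $D(\mathcal T_0,\mathcal T_1)=\beta\alpha(\alpha+1)$ while $\mathit{OPT}=(\beta+1)\bigl(\alpha(\alpha+1)/2+2\alpha\bigr)$, so $2\,\mathit{OPT}-D(\mathcal T_0,\mathcal T_1)=\alpha(\alpha+1)+4\alpha(\beta+1)$. To get $\mathit{OPT}=\Theta(n^2)$ you must take $\beta=O(1)$ and hence $\alpha=\Theta(n)$, which forces $2\,\mathit{OPT}-D(\mathcal T_0,\mathcal T_1)=\Theta(n^2)$ --- so the near-equality $D=2\,\mathit{OPT}-o(n^2)$ you posit is never achievable here, and your final bound $\delta\,\mathit{OPT}-o(n^2)$ does not follow as written. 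The fix is to make the construction's parameter depend on $\delta$: choosing $\beta=2/\delta$ (as the paper does) gives $D(\mathcal T_0,\mathcal T_1)-(2-\delta)\,\mathit{OPT}=\tfrac{\delta}{2}\alpha(\alpha+1)-O_\delta(\alpha)=\Omega(n^2)$, i.e., the ratio $D/\mathit{OPT}$ exceeds $2-\delta$ by a margin once $\beta$ is large enough relative to $1/\delta$, even though it never approaches $2$ up to $o(n^2)$ additive error. Substituting this inequality for your (i) in the triangle-inequality step, the rest of your argument goes through unchanged.
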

\begin{proof}
    Suppose we have access to an algorithm $\mathcal A$ from the class $\mathcal C$, as defined in Section~\ref{def:class}, with a speed-up factor of $c = 2-\delta$, $\delta$ is a positive real constant. We show the existence of a tree, and an adversarial evolver, where such a speed-up is not sufficient for $\dist \in O(n)$.

We adapt a construction from Biniaz \etal ~\cite{tokenadversary}. See Figure~\ref{fig:adv}. Let $\mathcal T_0$ be a uniquely labeled tree, with $\beta$ wings, $\alpha$ tails, and a central vertex. Each wing contains $\alpha$ nodes. For our purposes, we let $\alpha \in \Omega (n)$. $n = \alpha \beta + \alpha + 1$. Let $\mathcal T_1$ be another labeled instance of the same tree, where the labels of the wings, are cyclically permuted. The order of the labels on a wing remains the same, as do other labels of the tree. This gives us $D(\mathcal T_1, \mathcal T_0) = \beta \alpha (\alpha +1)$.
\begin{figure}

    \centerline{\includegraphics[scale=0.60]{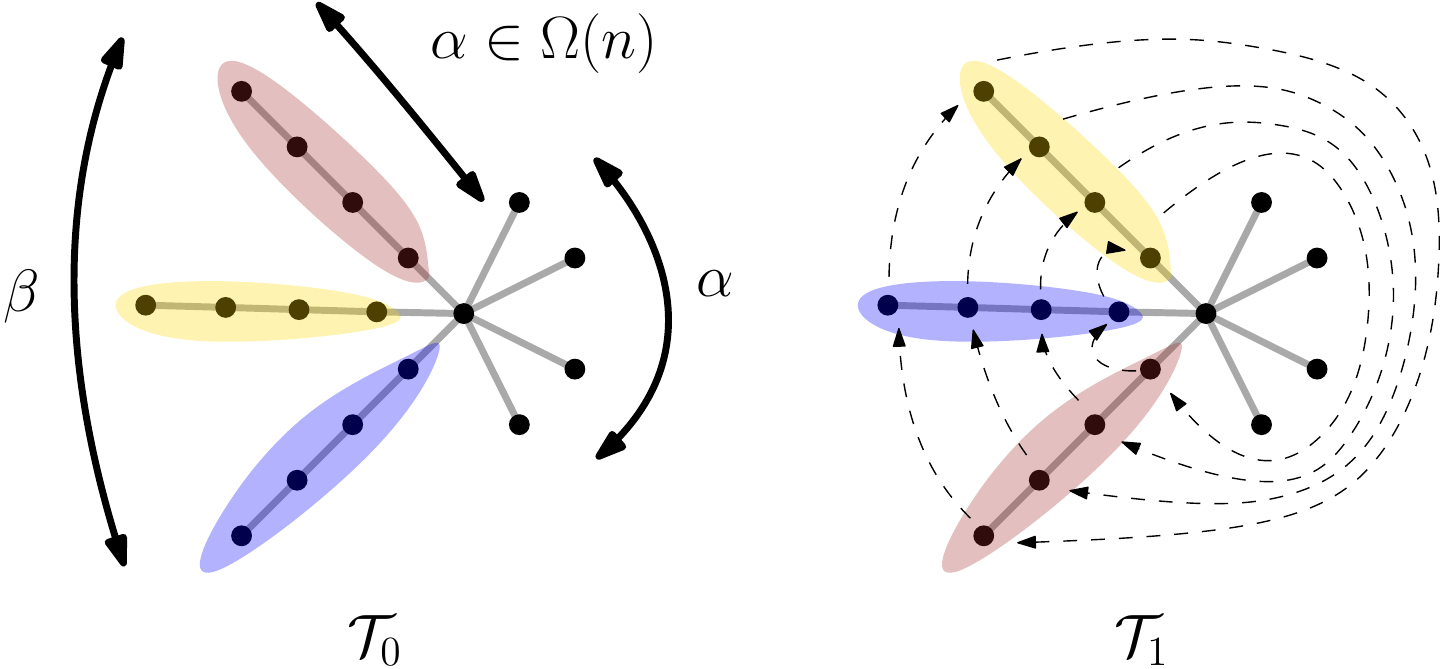}}
    \caption{$\mathcal T_0$ is a $n$-node tree with $\beta$ wings, $\alpha$ tails, and a central vertex. Each wing contains $\alpha$ nodes. $\mathcal T_1$ has the labels of the wings of $\mathcal T_0$ cyclically permuted. Adapted from~\cite{tokenadversary}. }
    \label{fig:adv}
\end{figure}

Biniaz \etal ~\cite{tokenadversary} show that the optimal number of adjacent swaps to go from $\mathcal T_0$ to $\mathcal T_1$ is $opt(\alpha, \beta) = {(\beta + 1)(\alpha (\alpha +1)/2 + 2\alpha)}$. Consider a time $t_0$, where $\mathcal T_0$ is the labeled configuration of the tree, with our hypothesis tree $\mathcal H_0$ being exact, i.e., $D(\mathcal T_0, \mathcal H_0) = 0$. Next, consider an adversarial evolver $\mathcal E$, which performs $opt(\alpha,\beta)$ number of swaps such that at time $t_1 = t_0 + opt(\alpha,\beta)$, $\mathcal T_1$ is the true labeling.

Let $\mathcal H$ be the hypothesized labeling at time $t_1$. Since $\mathcal A$ has a speed-up of $2-\delta$, and can affect the distance by at most 1 every step, we have $D (\mathcal H_1, \mathcal T_0) \leq {(2-\delta)~opt(\alpha, \beta)}$.
Considering $\beta = 2/\delta$, and $\alpha = \Omega(n)$ we have the following:

\begin{align*}
    &D(\mathcal H_1, \mathcal T_1) \\
    &\geq D(\mathcal T_1, \mathcal T_0) - D(\mathcal H_1, \mathcal T_0) \tag*{{[$D(\cdot,\cdot)$ is a metric]}}\\
    &\geq \beta \alpha (\alpha +1) - (2-\delta)~(\beta + 1)\left(\frac{\alpha(\alpha +1)}{2} + 2\alpha\right)\\
    & \geq \left(\frac{2}{\delta} - \frac{(2-\delta)(2+\delta)}{2\delta}\right)\alpha (\alpha +1) - s(\delta) \alpha \tag*{[Set $\beta = \frac{2}{\delta}$, $~s(\delta)$ is a constant]}\\
    &\geq \frac{\delta}{2}\alpha (\alpha +1) - s(\delta) \alpha ~\in~ \Omega(n^2). \tag*{[For $\alpha \in \Omega (n)$, and constant $\delta$]}
\end{align*}

\end{proof}

\section{Concluding Remarks}

In this paper, we have presented an efficient algorithm for tracking vertex labels in a tree in the evolving data framework. Our algorithm allows for both randomized and adversarial evolution of the data, subject to allowing a constant speedup factor over the evolver. Our analysis showed that in the limit, it is possible to maintain labels to within an average distance of $O(1)$ of their actual locations. We also presented nearly matching lower bounds, both on the distance and the speed-up factor.

This raises the question whether the evolving data framework can be fruitfully applied to tracking the movement of objects through more complex spaces and structures. Applications include real-time tracking of moving agents through GPS tracking of unmanned aerial vehicles \cite{uav} and tracking disease hot-spots that evolve over the course of time \cite{hotspot}.

We would like to thank Michael Goodrich for introducing us to the evolving data framework and for inspiring discussions on this topic.

\bibliographystyle{plain}
\bibliography{evolving,udg,cccg}
\end{document}